\newtheorem{theorem}{Theorem}}
\newtheorem{corollary}{Corollary}}
\newtheorem{remark}{Remark}}
\newcommand\numberthis{\addtocounter{equation}{1}\tag{\theequation}}
\newcommand{\1}{\mathds{1}}
\begin{document}
\title{\LARGE \bf 
A Communication-Free Master-Slave Microgrid with Power Sharing}

%\author{Pooya~Monshizadeh}
%, Claudio De Persis, Nima Monshizadeh, Arjan van der Schaft
%\thanks{A. van der Schaft and P.~Monshizadeh are with Johann Bernoulli Institute, and N. Monshizadeh and C. De Persis are with Smart Manufacturing Systems Institute for Mathematics and Computer Science, University of Groningen, 9700 AK, the Netherlands.
%        {\tt\small e-mail:p.monshizadeh@rug.nl, n.monshizadeh@rug.nl, c.de.persis@rug.nl, A.J.van.der.Schaft@math.rug.nl }.}
%        }

\author{Pooya~Monshizadeh \and Claudio De Persis \and Nima Monshizadeh \and Arjan van der Schaft% <-this % stops a space
%\thanks{*This work was not supported by any organization}% <-this % stops a space
\thanks{Pooya Monshizadeh and Arjan van der Schaft are with the Johann Bernoulli Institute for Mathematics and Computer Science, University of Groningen, 9700 AK, the Netherlands,
        {\tt\small p.monshizadeh@rug.nl, a.j.van.der.schaft@rug.nl}}%
\thanks{Claudio De Persis and Nima Monshizadeh are with the Engineering and Technology Institute, University of Groningen, 9747 AG, the Netherlands,
        {\tt\small n.monshizadeh@rug.nl, c.de.persis@rug.nl}}%
\thanks{This work is supported by the STW Perspectief program "Robust Design of Cyber-physical Systems" under the auspices of the project "Energy Autonomous Smart Microgrids".}
}
\maketitle

\begin{abstract}
In this paper a design of a master-slave microgrid consisting of grid-supporting current source inverters and a synchronous generator is proposed. The inverters are following the frequency of the grid imposed by the synchronous generator. Hence, the proposed structure of the microgrid is steadily synchronized. We show that the method achieves power sharing without the need of communication. Furthermore, no change in operation mode is needed during transitions of the microgrid between islanded and grid-connected modes.
\end{abstract}

%\begin{keywords}
%Distributed Generation, Microgrid, Master-Slave, Grid-Supporting Inverters, Droop Controllers, Current Source Inverters
%\end{keywords}

\section{Introduction}
A microgrid is a network of connected power sources and loads in a small area which can be seen as one entity within the wide area electrical power system. Considering such a network as a building block of the power grid is mainly motivated by preventing blackouts. The microgrid is capable of disconnecting itself from the main grid in case of a fault in the main grid and reconnecting when the fault is resolved. In classical networks, electrical energy sources were mainly synchronous generators (SG). Currently, however, to take advantage of storage systems and renewable energies such as wind, solar, geothermal, etc., an interface for power regulation or conversion from Direct Current (DC) to Alternating Current (AC) is also needed. These interfaces are called inverters. A microgrid typically includes both synchronous generators and inverters.

Inverters are divided into three categories. The first category, "grid-forming" inverters, act as a voltage source with fixed frequency and amplitude. In case the microgrid disconnects from the main grid and goes to the so-called "islanded mode", these inverters can provide a reference for the voltage frequency. However, an islanded microgrid consisting of only grid-forming inverters typically suffers from poor power sharing and desynchronized phases. In particular, one voltage source might bear a high percentage of the load, while the others provide a lower share of power. Furthermore, phase differences might lead to extreme voltage attenuation. "grid-feeding" inverters form the second category, working as current sources such that their voltage follow the frequency of the grid/microgrid. These sources are capable of injecting either a constant or a time-varying power regardless of the load, e.g. by the Maximum Power Point Tracking (MPPT) method. Inverters of this type are extensively used in the wide area electrical power networks. The third category contains sources that are designed to contribute to the regulation and stability of the microgrid, which are called "grid-supporting". This category includes both the Voltage Source Inverter (VSI) and the Current Source Inverter (CSI). Grid-supporting VSIs measure the active and reactive power they inject to the grid and determine the output voltage amplitude and frequency according to these measured values. Counter-wise, grid-supporting CSIs measure the voltage amplitude and frequency, in order to inject a desired amount of active and reactive power accordingly. Figure \ref{f:Itypes} depicts the inverter categories (for further discussion see \cite{Rocabert2012}).
\begin{figure}
  \includegraphics[width=8.5cm]{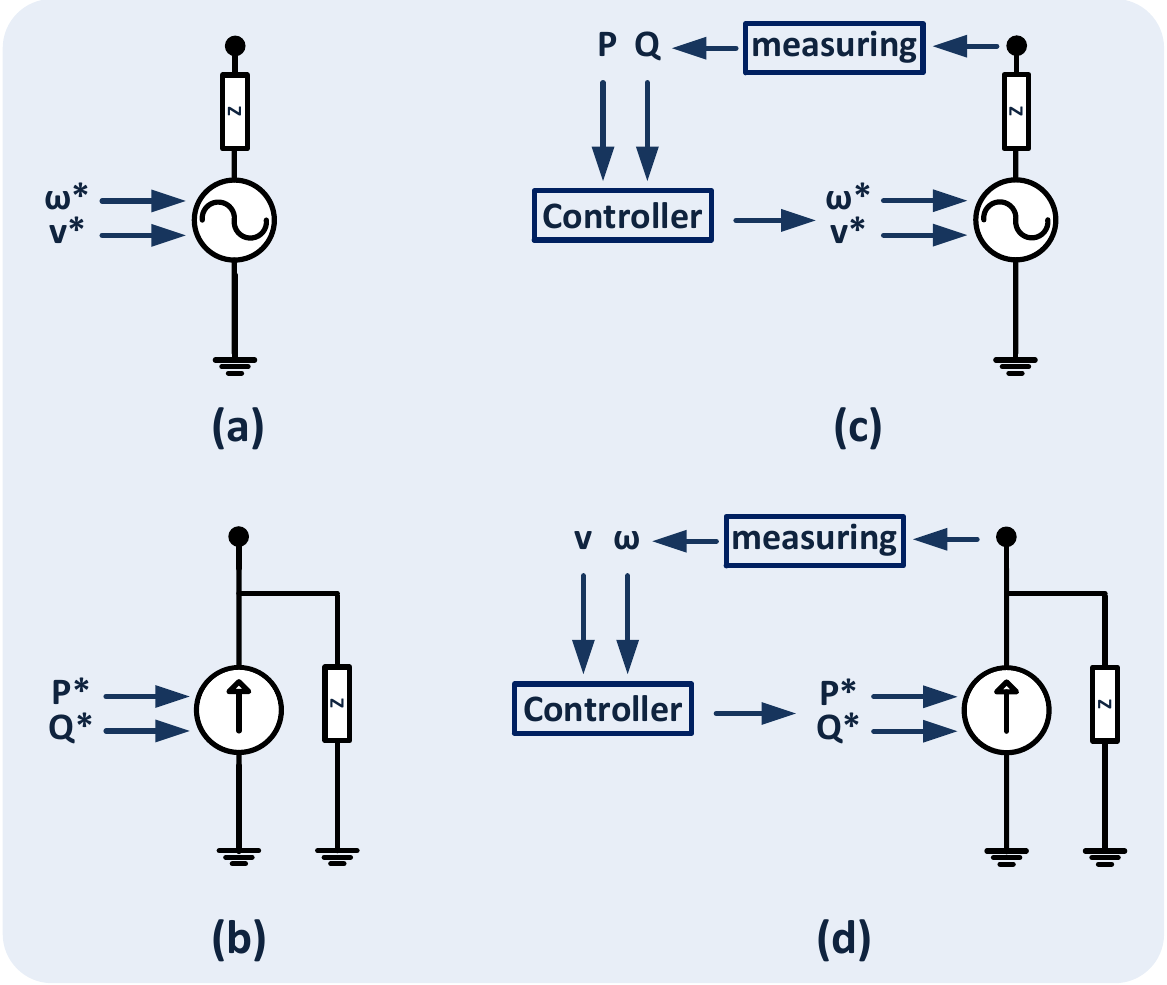}
  \caption{Inverter categories; (a) Grid-forming;  (b) Grid-feeding;  (c) Grid-supporting voltage source;  (d) Grid-supporting current source.}
  \label{f:Itypes}
\end{figure}

Controlling the grid-supporting inverters acting as VSI is mostly carried out by the design of a controller such that the sources artificially mimic the behavior and dynamics of a synchronous generator. This conveys that the frequency of the inverter deviates from the nominal value in case of a mismatch between the measured electrical output power and its nominal value. In particular, droop controllers are well known to generate such a behavior. Similar to synchronous generators, these inverters are capable of "self-regulation" and therefore the network consisting of these inverters synchronize in frequency \cite{SimpsonPorco2013,Schiffer2014,Nima2015}. Furthermore, the power injection of such sources are shared according to the droop coefficients, which are often proportional to the assigned nominal power values. 

While a large number of articles have focused on VSI, few have investigated the deployment of the  grid-supporting inverters acting as a CSI \cite{Rocabert2012}, \cite{Paquette2014}, \cite{Reigosa2009}, which are the key components in our proposed microgrid design. These devices are capable of injecting electrical power dynamically while following the frequency of the grid. Current source inverters inject active and reactive currents according to the rotating frame of the grid voltage at their point of connection. Therefore they need a voltage source online in the grid to follow its frequency. Controlling a grid with such a scheme has been first suggested by \cite{Chen1995} for operation of parallel uninterruptible power systems. However, the approach is not considered suitable for distributed generation (microgrid) mainly due to high communication requirements and the needs for supervisory control and extra cabling \cite{Hatziar2014}.

In this paper, we propose a microgrid structure, consisting of one synchronous generator (or a droop-controlled VSI) as the main power source, and a number of grid-supporting CSIs (see Figure \ref{fig:graph2}).
Inverters follow the frequency of the grid which is determined by the synchronous generator in the islanded mode. Such a system design is close to a master-slave architecture yet without communication. Note that no switching is needed to go through the transition from grid-connected to islanded and vice versa. Merely synchronizing the synchronous generator phase with the main grid suffices for going back to the grid-connected mode. In the case of the main generator failure, the so-called "tertiary control" can switch the master to another generator similar to the method of high-crest current \cite{Guerrero2008}. We show that our proposed system is stable and achieves the goal of power sharing for any topology of the network. Furthermore, we indicate that the system is capable of achieving cost optimization similar to the method used in \cite{Persis2014}. Additionally, as a step towards practical applicability, an implementation of the proposed inverter is also provided.

There are three main advantages of the proposed architecture over conventional droop-controlled VSIs. First, the voltage frequency of all sources are synchronized at all times, while droop-controlled VSIs located at different nodes, generate their own frequency. Second, communication-less secondary control can be reached with higher accuracy. Third and finally, there are physical advantages of using current source inverters over voltage source inverters such as smoother DC-side current, longer lifetime of energy storage, inherent voltage boosting capability, and lower costs \cite{Dash2011}. 

The following sections are organized as follows: In Section II, the structure and the control technique are elaborated. A model is suggested for the proposed microgrid and the system stability with the designed controller is investigated. Afterwards, power sharing of the sources and cost optimization are discussed. In the third section, an implementation of the proposed grid-supporting inverter is provided. Finally, Section 4 provides simulation results of the implemented model.
\begin{figure}
	\centering
	\includegraphics[width=6cm]{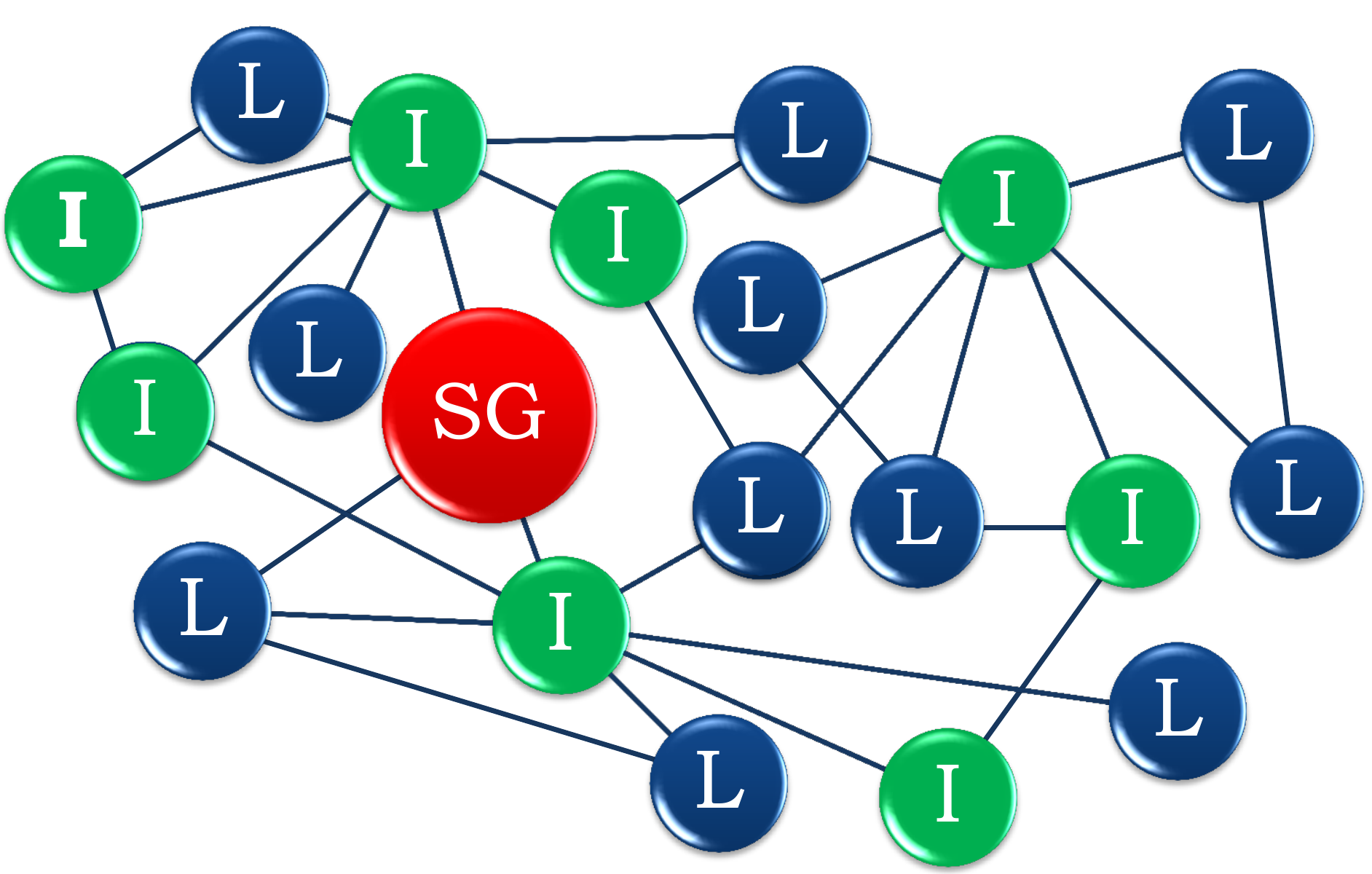}
	\caption{The proposed method applies to any topology of the microgrid network graph. The synchronous generator (SG) acts as the master, and inverters (I) act as slaves. The blue nodes represent power loads (L).}
	\label{fig:graph2}
\end{figure}
\section{Proposed Structure and Control Technique}
As mentioned in Section I, the microgrid in the proposed scheme contains a synchronous machine and multiple CSI inverters. The network of this grid is represented by a connected and undirected graph $\mathcal{G}(\mathcal{V},\mathcal{E} ) $. The nodes $\mathcal{V}= \{ 1,...,n \} $ represent a synchronous generator, grid supporting inverters and constant power loads. Edges $\mathcal{E} \subset \mathcal{V} \times \mathcal{V}$ account for the distribution lines. The nodes are partitioned as $\mathcal{V}=\mathcal{V}_G \cup \mathcal{V}_I \cup \mathcal{V}_L$, where $\mathcal{V}_G$, $\mathcal{V}_I$, and $\mathcal{V}_L$ correspond to the synchronous generator, inverters, and loads respectively. We also denote the cardinality of $\mathcal{V}_L$ and $\mathcal{V}_I$ with $n_I$ and $n_L$.
\subsection{Model}
We use the classical swing equation as the model of the SG \cite{Machowski2009}\footnote{D droop-controlled voltage source inverter can be modeled with similar dynamics, see \cite{Schiffer2014}.}
\begin{equation}\label{swing}
M\dot{\omega }+D\omega =P_m-P_e
  \; \text{,} 
\end{equation}
where $M \in \mathbb{R^+}$ is the angular momentum, $\omega \in \mathbb{R}$ is the frequency deviation from the nominal frequency, $D\in \mathbb{R^+}$ is the damping coefficient, and $P_m$ and $P_e$ are the mechanical (input) and electrical  (output) power respectively. The equality \eqref{swing} indicates that in case of a mismatch between mechanical power and electrical power, the synchronous generator rotor speed, as well as the voltage frequency, deviates from the nominal value.

Considering $P_{I,i}, P_{L,i} \in \mathbb{R}^+$ as the injected power of the inverter and the consumed power of the load at node $i$, we propose the following model for the aforementioned microgrid structure:
\begin{align*}
& M\dot{\omega }+D\omega =P_G^*+u-P_e 
\\
& P_{I,i}=P_{I,i}^*+\varUpsilon_i \numberthis \label{sys1}
\\
& P_{L,i}=P_{L,i}^*+\delta _{L,i}
\end{align*}
together with the coupling equality
\begin{align}
& P_e +\sum\limits_{i=1}^{n_I} P_{I,i}- \sum\limits_{i=1}^{n_L} P_{L,i}=0 \;\text{,} \label{powerbalance1} 
\end{align}
where $u \in \mathbb{R}$ is the control input of the synchronous generator, $\varUpsilon_i \in \mathbb{R}$ is the control input of the inverter at node $i$, and $\delta _{L,i} \in \mathbb{R}$, is the deviation from the nominal power consumption. We assume that the voltage magnitudes at each node in the network are constant and the reactive power injection at each node is identically zero. Note that the internal dynamics of $P_e$, $P_I$, and $P_L$ are not crucial for our analysis and their coupling equality \eqref{powerbalance1} represents the law of conservation of energy. Here, the distribution lines are assumed to be purely inductive implying that there is no active power consumption in the distribution lines. We also assume that nominal values $P_G^*$,$P_{I,i}^*$, and $P_{L,i}^*$ satisfy the power balance
\begin{equation}
P_G^*+\sum\limits_{i=1}^{n_I} P_{I,i}^*-\sum\limits_{i=1}^{n_L} P_{L,i}^*=0 \;\text{.} \label{powerbalance2}
\end{equation}
The system (\ref{sys1}) can be written in compact form as
\begin{align}
& M\dot{\omega }+D\omega =P_G^*+u-P_e \label{sys21}
\\
& P_I=P_I^*+\varUpsilon  \label{sys22}
\\
& P_L=P^*_L+\delta _L \; \text{,} \label{sys23}
\end{align}
where $\varUpsilon \in \mathbb{R}^{n_I}$ and $\delta_L \in \mathbb{R}^{n_L}$ are the inverter control input and load power vectors, which represent the deviations from the nominal power values. We consider the power sharing vector $\xi  \in \mathbb{R}^{n_I}$ to set the power injections of inverter controllers proportionally i.e. $\frac{\Upsilon_i}{\Upsilon_j}=\frac{\xi_i}{\xi_j}$. Without any loss of generality we assume that 
\begin{equation}
\1^T\xi=1\;\text{.} \label{xi}
\end{equation} 
Hence, we rewrite $\Upsilon$ as
\begin{align}
& \varUpsilon=v \xi \; \text{,} \label{Upsilon}
\end{align}
where $v \in \mathbb{R}$ is the common control signal for all inverters and is to be designed later. Bearing in mind \eqref{sys22}, \eqref{sys23}, and \eqref{Upsilon}, \eqref{powerbalance1} leads to
\begin{align*}
P_e&=\1^T(P_L^*+\delta _L)-\1^T(P_I^*+v\xi) \; \text{.}
\end{align*}
By substituting this into (\ref{sys21}), and using \eqref{xi} we have
\begin{equation*}
M\dot{\omega }+D\omega =P_G^*+u-\1^T(P_L^*+\delta _L)+\1^TP_I^*+v\ \; \text{.}
\end{equation*}
Finally, according to \eqref{powerbalance2}
\begin{equation}
M\dot{\omega }+D\omega =u+v-\1^T\delta _L \label{sys} \; \text{.}
\end{equation}
This indicates that the controllers at the inverter nodes are aligned with the action of synchronous generator controller.
\begin{remark}
Throughout the paper, the delays in the Phase Locked Loop (PLL) and Pulse Width Modulation (PWM) switchings are ignored and hence frequency following capability of inverters is assumed to be instantaneous. Also all line reactances are
assumed to be sufficiently low. Therefore, the network frequency $\omega$ has the same value and is available at all nodes. 
\end{remark}
\subsection{Controller Design}
\subsubsection{Controller for Inverters}
We consider the main generator  acting as the voltage reference with a fixed power control input ($\bar{u}$) and propose proportional integral controllers for the inverters\footnotemark:
\begin{align*} 
&  u=\bar{u}
\\
& \dot{\chi}=\omega  \numberthis \label{PI controller1} 
\\
& v=\underbrace{-\gamma \omega }_{v_1} \underbrace{-\beta \chi }_{v_2} \; \text{.}  
\end{align*}
\footnotetext{Although the integral of the voltage frequency can be interpreted as the voltage phase, we avoid denoting it by $\theta$, since here $\chi$ is a control variable and is not bounded to $[0,2\pi)$. }
Now, the system \eqref{sys} can be described as
\begin{align*} 
& M\dot{\omega }+(D+\gamma)\omega =v_2+\bar{u}-\1^T\delta _L
\\
&\dot{\chi }=\omega  \numberthis \label{PI system1}
\\
& v_2=-\beta \chi  \; \text{.}
\end{align*}
\begin{theorem}\label{theorem 1}
For any $\beta,\;\gamma>0$, the equilibrium $(\bar{\omega} =0,\bar{\chi }=\frac{\bar{u}-\1^T \delta _L}{\beta} )$ of system \eqref{PI system1} is globally asymptotically stable (GAS), and the inverter power injection $\Upsilon=\xi (-\gamma \omega-\beta \chi )$ converges asymptotically to $\bar{\Upsilon}=-\xi (\bar{u}-\1^T \delta _L )$.
\end{theorem}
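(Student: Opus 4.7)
The plan is to exploit the fact that the closed-loop system \eqref{PI system1} is linear and two-dimensional, so standard Lyapunov techniques will suffice. First I would verify the equilibrium by setting $\dot{\omega}=0$ and $\dot{\chi}=0$: the second equation forces $\bar{\omega}=0$, and substitution into the first then yields $-\beta\bar{\chi}+\bar{u}-\1^T\delta_L=0$, giving the claimed value of $\bar{\chi}$. This also shows uniqueness of the equilibrium.

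Next I would translate to the error coordinates $\tilde{\omega}:=\omega$ and $\tilde{\chi}:=\chi-\bar{\chi}$. Eliminating $v_2=-\beta\chi$ and using $-\beta\bar{\chi}+\bar{u}-\1^T\delta_L=0$, the system becomes the autonomous linear system
\begin{align*}
M\dot{\tilde{\omega}} &= -(D+\gamma)\tilde{\omega}-\beta\tilde{\chi},\\
\dot{\tilde{\chi}} &= \tilde{\omega}.
\end{align*}
To establish GAS, I would introduce the quadratic, radially unbounded, positive-definite Lyapunov function
\begin{equation*}
V(\tilde{\omega},\tilde{\chi})=\tfrac{1}{2}M\tilde{\omega}^2+\tfrac{1}{2}\beta\tilde{\chi}^2,
\end{equation*}
for which a direct computation (the cross terms cancel since $\dot{\tilde{\chi}}=\tilde{\omega}$) gives
\begin{equation*}
\dot{V}=-(D+\gamma)\tilde{\omega}^2\le 0.
\end{equation*}
Applying LaSalle's invariance principle, trajectories converge to the largest invariant set contained in $\{\tilde{\omega}=0\}$; on this set $\dot{\tilde{\omega}}=0$ forces $\beta\tilde{\chi}=0$ and thus $\tilde{\chi}=0$. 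Since $\beta,\gamma>0$, the origin in the error coordinates is globally asymptotically stable, which is the first claim. (A quick sanity check via the characteristic polynomial $\lambda^2+\tfrac{D+\gamma}{M}\lambda+\tfrac{\beta}{M}$, whose coefficients are positive, gives the same conclusion through Routh--Hurwitz.)

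Finally, the statement about $\Upsilon$ follows by continuity: as $\omega\to 0$ and $\chi\to\bar{\chi}=(\bar{u}-\1^T\delta_L)/\beta$, the expression $\Upsilon=\xi(-\gamma\omega-\beta\chi)$ tends to $-\xi\beta\bar{\chi}=-\xi(\bar{u}-\1^T\delta_L)=\bar{\Upsilon}$, as claimed. There is no real obstacle here — the only point requiring mild care is recognising that the cross-term cancellation in $\dot{V}$ is what makes the quadratic Lyapunov function work, and that LaSalle (rather than strict negative definiteness) is needed because $\dot{V}$ vanishes on the $\tilde{\chi}$-axis.
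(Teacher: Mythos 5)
Your proof is correct and follows essentially the same route as the paper: the paper uses the identical Lyapunov function $W=\tfrac{1}{2}M\omega^2+\tfrac{1}{2}\beta(\chi-\bar{\chi})^2$ (your $V$ in shifted coordinates) with $\dot{W}=-(D+\gamma)\omega^2$ and concludes by LaSalle's invariance principle, then reads off $\bar{\Upsilon}=-\xi(\bar{u}-\1^T\delta_L)$ on the invariant set just as you do. Your explicit passage to error coordinates and the parenthetical Routh--Hurwitz check are harmless additions; if anything, your identification of the invariant set ($\dot{\tilde{\omega}}=0$ forcing $\tilde{\chi}=0$) is spelled out slightly more carefully than in the paper.
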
 
\begin{proof}
We consider the Lyapunov function
\begin{align*}
W=& \frac{1}{2}M\omega ^2 +\frac{1}{2}\beta( \chi -\bar{\chi })^2 \; \text{.}
\end{align*}
We have
\begin{align*}
\dot{W}=& -(D+\gamma) \omega ^2
\\& +\omega (v_2+\bar{u}-\1^T\delta _L)
\\& +(\beta \chi -(\bar{u}-\1^T\delta _L))\omega 
\\=& -(D+\gamma)\omega ^2 \; \text{.}
\end{align*}
 Observe that $W$ is radially unbounded and has a strict minimum at $\omega=0$ and $\chi =\frac{\bar{u}-\1^T \delta _L}{\beta}$. By invoking LaSalle invariance principle, the solutions converge to the largest invariant set for \eqref{PI system1} s.t. $\omega=0$. On this set, the solution to \eqref{PI system1} satisfy
\begin{align*}
&\dot{\chi }=0
\\
& \bar{v}_2=\1^T\delta _L \label{vbar}-\bar{u} \; \text{.} \numberthis
\end{align*}
The latter shows that all the solutions on the invariant set converge to the equilibrium ($\bar{\omega}=0$,$\bar{\chi} =\frac{\bar{u}-\1^T \delta _L}{\beta}$). Having shown before stability of this equilibrium, GAS of ($\bar{\omega}=0$,$\bar{\chi} =\frac{\bar{u}-\1^T \delta _L}{\beta}$) is guaranteed. Moreover, on the invariant set we have:
\begin{align*}
\bar{\Upsilon}=\xi (-\beta \bar{\chi})=-\xi (\bar{u}-\1^T \delta _L ) \; \text{.}
\end{align*}
This completes the proof.
\end{proof}
\begin{remark}
The interpretation of the proposed method is that by taking advantage of the information embedded in the frequency deviation as a result of the intrinsic droop characteristic of a synchronous generator, we avoid further communication. 
\end{remark}
\begin{remark}
In the sense of architectural control, the goals of primary and secondary controls are achieved. The advantage here, compared with the conventional droop controlled voltage source inverters, is that the integral controller is acting more precisely, since the integral action is taken over a common frequency at all nodes. Using integral controllers in the secondary control layer of a microgrid with conventional droop controllers, fails to maintain load sharing \cite{Dorfler2014} and is not considered a long time stable solution \cite{Guerrero2008}.
\end{remark}
Setting $\beta=0$, modifies the controller to a proportional one, i.e. $v=-\gamma \omega$. In this case, the system (\ref{PI system1}) can be described as
\begin{align*} \label{P system}
& M\dot{\omega }+(D+\gamma)\omega =\bar{u}-\1^T\delta _L
  \numberthis \; \text{.}
\end{align*}
\begin{corollary}
For any $\gamma>0$, the equilibrium $(\bar{\omega} =\frac{\bar{u}-\1^T\delta _L}{D+\gamma})$ of system \eqref{P system} is globally asymptotically stable.
\end{corollary}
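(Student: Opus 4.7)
The corollary concerns the scalar linear first-order ODE
\[
M\dot{\omega} + (D+\gamma)\omega = \bar{u} - \1^T \delta_L,
\]
with $M>0$ and $D+\gamma>0$. Setting $\dot\omega=0$ immediately yields the stated equilibrium $\bar\omega = (\bar u - \1^T\delta_L)/(D+\gamma)$, so the claim reduces to showing that this unique equilibrium attracts all trajectories globally and is Lyapunov stable.

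The plan is to shift coordinates by defining $\tilde\omega := \omega - \bar\omega$. By construction, $\tilde\omega$ satisfies the homogeneous equation
\[
M\dot{\tilde\omega} = -(D+\gamma)\tilde\omega,
\]
which is a stable scalar linear system since $(D+\gamma)/M > 0$. One can either invoke the explicit solution $\tilde\omega(t) = \tilde\omega(0)\,e^{-(D+\gamma)t/M}$, which decays to zero from any initial condition, or equivalently use the radially unbounded Lyapunov candidate
\[
V(\tilde\omega) = \tfrac{1}{2}M\tilde\omega^2,
\]
whose derivative along trajectories equals $\dot V = -(D+\gamma)\tilde\omega^2 \le 0$, with strict negativity off the equilibrium. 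Either route gives global asymptotic stability of $\tilde\omega=0$, hence of $\bar\omega$ in the original coordinates.

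There is no substantive obstacle: the argument is essentially the one-dimensional linear specialization of the Lyapunov computation already carried out in the proof of Theorem~\ref{theorem 1}, with $\beta=0$ removing the integrator state $\chi$ entirely so that LaSalle's invariance principle is not even required. The only thing to note explicitly is that $\gamma>0$ together with the underlying assumption $D>0$ ensures $D+\gamma>0$, which is what makes the equilibrium well-defined and the decay rate strictly positive.
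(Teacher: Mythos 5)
Your proposal is correct and matches the paper's argument: the paper likewise uses the radially unbounded Lyapunov function $W=\frac{1}{2}M(\omega-\bar{\omega})^2$ and computes $\dot{W}=-(D+\gamma)(\omega-\bar{\omega})^2$, which is exactly your shifted-coordinate Lyapunov route. Your additional observation that the explicit exponential solution would also suffice is a harmless equivalent remark, not a different method.
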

\begin{proof}
We consider the quadratic Lyapunov function
\begin{align*}
W=& \frac{1}{2}M(\omega-\bar{\omega}) ^2 \; \text{.}
\end{align*}
We have
\begin{align*}
\dot{W}=& (\omega-\bar{\omega})(\bar{u}-\1^T\delta _L-(D+\gamma)\omega) 
\\=& -(D+\gamma)(\omega-\bar{\omega}) ^2 \; \text{,}
\end{align*}
which proves GAS of the equilibrium.
\end{proof}
It is observed that the inverters, here, contribute to the primary control level leading to a frequency deviation at steady state that is smaller than in the case of a sole generator.
\begin{remark}
In the case $\beta=0$, $v$ is equal to $-\gamma \omega$, and the inverter power injection at node $i$ is obtained as $P_{I,i}=P_{I,i}^*-\xi_i\gamma \omega$. The inverter dynamic model is hence similar to a conventional droop-controlled system. Here, the output power is regulated according to the frequency deviation of the grid, while in a droop-controlled voltage source inverter, the frequency is regulated according to the deviation from the nominal output power value. The advantage here is that the power network remains synchronized at all times. This leads to lower harmonics being received at the phased locked loops, by which voltage frequency (in our method) and output power (in conventional droop method) are measured.
\end{remark}
\subsubsection{Controllers for Inverters and the Main Generator}
Up to now constant power input of the controller ($\bar{u}$), is considered for the synchronous generator. Hence, no pre-determined sharing occurs between the main generator and the inverters. However, it is possible to make use of an integral controller for the main generator as well. This modifies the controller \eqref{PI controller1} to
\begin{align*} 
&\dot{\chi }=\omega 
\\
& u=-\alpha \chi  \numberthis \label{PI controller2}
\\
& v=\underbrace{-\gamma \omega }_{v_1} \underbrace{-\beta \chi }_{v_2} \; \text{.}
\end{align*}
Here, $\alpha\in \mathbb{R}^+$ and $\beta \in \mathbb{R}^+$ determine the power sharing at steady state between the synchronous generator and the rest of the sources (inverters). Let $\tilde{u}=u+v_2$, then the system can be described as
\begin{align*} 
& M\dot{\omega }+(D+\gamma)\omega =\tilde{u}-\1^T\delta _L
\\
&\dot{\chi }=\omega  \numberthis \label{PI system2}
\\
& \tilde{u}=-(\alpha+\beta) \chi  \; \text{.}
\end{align*}
\begin{corollary} \label{SGI}
The equilibrium $(\bar{\omega} =0,\bar{\chi }=-\frac{1}{\alpha+\beta} \1^T \delta _L)$ of system \eqref{PI system2} is globally asymptotically stable.
\end{corollary}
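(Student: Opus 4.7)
The plan is to mirror the Lyapunov argument from Theorem~\ref{theorem 1}, since system \eqref{PI system2} is structurally identical to \eqref{PI system1} after replacing $\beta$ by $\alpha+\beta$ and $\bar{u}$ by $0$ (with $v_2$ playing the role of $\tilde{u}$). So I expect the proof to be essentially a corollary-style reduction, with no new obstacle beyond getting the bookkeeping right.

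First, I would check that the claimed point is indeed an equilibrium: at a steady state one needs $\dot{\chi}=\omega=0$ and $M\dot{\omega}=0$, which together with $\tilde{u}=-(\alpha+\beta)\chi$ force $-(\alpha+\beta)\bar{\chi}-\1^T\delta_L=0$, giving $\bar{\chi}=-\tfrac{1}{\alpha+\beta}\1^T\delta_L$ as stated.

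Next, I would introduce the candidate Lyapunov function
\begin{align*}
W=\tfrac{1}{2}M\omega^2+\tfrac{1}{2}(\alpha+\beta)(\chi-\bar{\chi})^2 ,
\end{align*}
which is radially unbounded and attains its unique strict minimum at $(\bar{\omega},\bar{\chi})$. Differentiating along trajectories of \eqref{PI system2}, substituting $\tilde{u}=-(\alpha+\beta)\chi$, and using the equilibrium identity $(\alpha+\beta)\bar{\chi}=-\1^T\delta_L$, the cross terms should cancel exactly as in Theorem~\ref{theorem 1}, leaving $\dot{W}=-(D+\gamma)\omega^2\le 0$ (note $D+\gamma>0$).

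Finally, I would apply LaSalle's invariance principle: on the largest invariant set contained in $\{\omega=0\}$, the first equation of \eqref{PI system2} forces $\tilde{u}=\1^T\delta_L$, hence $(\alpha+\beta)\chi=-\1^T\delta_L$, i.e.\ $\chi=\bar{\chi}$. Thus the invariant set reduces to the equilibrium, and global asymptotic stability follows from radial unboundedness of $W$. The only step requiring any care is verifying the cancellation of cross terms in $\dot{W}$; this is a direct computation that presents no conceptual difficulty given the template proof of Theorem~\ref{theorem 1}.
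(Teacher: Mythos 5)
Your proposal is correct and matches the paper's proof exactly: the same Lyapunov function $W=\tfrac{1}{2}M\omega^2+\tfrac{1}{2}(\alpha+\beta)(\chi-\bar{\chi})^2$, the same computation yielding $\dot{W}=-(D+\gamma)\omega^2$, and the same LaSalle argument (which the paper invokes by stating that GAS follows ``analogous to the proof of Theorem~\ref{theorem 1}''). Your explicit verification of the equilibrium and spelling-out of the invariant-set step is merely a more detailed write-up of the identical route.
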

\begin{proof}
We consider the Lyapunov  function
\begin{align*}
W=& \frac{1}{2}M\omega ^2 +\frac{1}{2}(\alpha+\beta)( \chi -\bar{\chi })^2 \; \text{.}
\end{align*}
We have
\begin{align*}
\dot{W}=& -(D+\gamma) \omega ^2
\\& +\omega (\tilde{u}-\1^T\delta _L)
\\& +((\alpha+\beta) \chi +\1^T\delta _L)\omega 
\\=& -(D+\gamma)\omega ^2 \; \text{.}
\end{align*}
The GAS of the equilibrium follows analogous to the proof of Theorem \ref{theorem 1}. 
\end{proof}
\begin{remark}
System \eqref{PI system2} provides stronger and faster controller overcoming limitations in controlling the power fluctuations with a sole synchronous generator. 
\end{remark}
\subsection{Power Sharing}
\subsubsection{Inverters}
The inverters described by model \eqref{P system} contribute to decrease the frequency deviation.  However, the inverters provided in model \eqref{PI system1} modify the power injection till the frequency of the microgrid is regulated and hence reaches the steady state $\omega =0$. In both cases, the increase in power injections are proportional according to the sharing vector $\xi$. 

Provided that some costs are assigned to the power injection of the inverters, optimization methods can be adapted to minimize the energy cost. Here, we show that by a proper choice of the sharing vector $\xi $ cost optimization can be achieved.
We consider the quadratic cost function
\begin{align*}
C(\bar{\Upsilon})=\frac{1}{2} \bar{\Upsilon}^T\Lambda\bar{\Upsilon} \; \text{,}
\end{align*}
where $\Lambda \in \mathbb{R}^{n_I\times n_I}$ is a diagonal matrix of cost coefficients assigned to each node (inverter). Recall that, by Theorem \ref{theorem 1}, at steady-state we have
\[
\bar{\Upsilon}=-\xi (\bar{u}-\1^T \delta _L ).
\]
Multiplying both sides of the above equality from the left by $\1^T$ yields the following optimization constraint
%\eqref{PI system1}
\begin{align*}
0=\bar{u}+\1^T \bar{\Upsilon} - \1^T\delta_L \;\text{.}
\end{align*}
To minimize the Lagrangian function 
\begin{align*}
\mathcal{L}=\frac{1}{2} \bar{\Upsilon}^T\Lambda\bar{\Upsilon}+\mu(\bar{u}+\1^T \bar{\Upsilon} - \1^T\delta_L) \; \text{,}
\end{align*} 
we have $$\mu=\frac{\bar{u}-\1^T\delta_L}{\1^T\Lambda^{-1}\1}.$$ Hence,
\begin{align*}
\bar{\Upsilon}&=\frac{\1^T\delta_L-\bar{u}}{\1^T\Lambda^{-1}\1} \Lambda^{-1}\1.
\end{align*}
In accordance with \eqref{Upsilon}, the vector $\bar\Upsilon$ can be rewritten as 
\begin{align}\label{Nima}
\bar{\Upsilon}&=(\1^T\delta_L-\bar{u})\xi_{opt}
\end{align}
%Considering \eqref{vbar} we have
%\begin{align*}
%\bar{\Upsilon}=\frac{\Lambda^{-1}\1}{\1^T\Lambda^{-1}\1} \bar{v} \; \text{.}
%\end{align*}
where
\begin{align} \label{opt}
	\xi_{opt}=\frac{\Lambda^{-1}\1}{\1^T\Lambda^{-1}\1}.
\end{align}  
%\begin{align} \label{opt}
%\xi_{opt}=\frac{\Lambda^{-1}\1}{\1^T\Lambda^{-1}\1}
%\end{align}  
%fulfills the optimization goal.
Noting that \eqref{opt} satisfies \eqref{xi}, we have the following corollary. 
\begin{corollary}
Let the vector $\xi_{opt}$ be given by \eqref{opt}. Then any solution to system \eqref{PI system1} is such that the corresponding inverter power injection $\Upsilon=v\xi_{opt}$ asymptotically converges to the optimal power injection \eqref{Nima}.
%$\bar{\Upsilon}=-(\bar{u}-\1^T \delta_L)\xi_{opt}$.
\end{corollary}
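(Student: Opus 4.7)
The plan is to recognize that this corollary is essentially a relabeling of Theorem \ref{theorem 1} combined with the Lagrangian minimization already carried out just before the statement. I would therefore not re-do any dynamics computation; instead, I would fix the sharing vector to $\xi=\xi_{opt}$ and invoke the earlier results verbatim.

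First, I would check that $\xi_{opt}$ defined by \eqref{opt} is an admissible sharing vector in the sense of \eqref{xi}. This amounts to the one-line computation
\begin{equation*}
\1^T\xi_{opt}=\frac{\1^T\Lambda^{-1}\1}{\1^T\Lambda^{-1}\1}=1,
\end{equation*}
so the standing normalization used throughout Section II holds. Consequently the system \eqref{PI system1} derived under the assumption $\1^T\xi=1$ is valid with this specific choice of $\xi$, and Theorem \ref{theorem 1} applies without modification.

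Second, I would apply Theorem \ref{theorem 1} with $\xi=\xi_{opt}$. The theorem directly yields global asymptotic convergence of $\Upsilon=v\xi_{opt}=\xi_{opt}(-\gamma\omega-\beta\chi)$ to
\begin{equation*}
\bar\Upsilon=-\xi_{opt}(\bar u-\1^T\delta_L)=(\1^T\delta_L-\bar u)\xi_{opt},
\end{equation*}
which is exactly the expression appearing on the right-hand side of \eqref{Nima}. Combined with the Lagrangian computation preceding the corollary, which identifies this same vector as the unique minimizer of $C(\bar\Upsilon)=\tfrac{1}{2}\bar\Upsilon^T\Lambda\bar\Upsilon$ subject to the power-balance constraint $\bar u+\1^T\bar\Upsilon-\1^T\delta_L=0$, the convergence claim follows.

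There is really no technical obstacle here; the only conceptual point worth verifying is that the simplex constraint $\1^T\xi=1$ inherited from the modeling side does not clash with the cost-minimization side. It does not, because the KKT condition forces $\1^T\bar\Upsilon=\1^T\delta_L-\bar u$, and together with the proportionality $\bar\Upsilon\propto\Lambda^{-1}\1$ this uniquely pins down $\xi_{opt}$ as in \eqref{opt}. Hence the proof reduces to these two observations, and the corollary is immediate from Theorem \ref{theorem 1}.
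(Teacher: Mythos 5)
Your proposal is correct and matches the paper's own argument: the paper likewise notes that $\xi_{opt}$ satisfies the normalization \eqref{xi} just before stating the corollary, and its proof simply invokes Theorem \ref{theorem 1} with $\xi=\xi_{opt}$ to conclude that $\Upsilon$ converges to $-\beta\bar\chi\,\xi_{opt}=-(\bar u-\1^T\delta_L)\xi_{opt}$, i.e.\ the expression \eqref{Nima}. Your additional remarks on the KKT conditions are consistent with the Lagrangian computation the paper performs before the corollary, so nothing is missing.
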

\begin{proof}
Since $\Upsilon=v\xi_{opt}$, and $v=-\gamma \omega - \beta \chi$, we know from Theorem \ref{theorem 1} that $\Upsilon$ converges to $-\beta \bar{\chi} \xi_{opt}$, which is equal to $ -(\bar{u}-\1^T \delta_L)\xi_{opt}$. This ends the proof.
\end{proof}
\begin{remark}
Similar to adjusting droop coefficients in conventional droop methods, various parameters can as well be interpreted (directly or inversely) as cost to form $\Lambda$ in \eqref{opt}: nominal power values (power ratings) \cite{Chandorkar1993}, State of Charge (Soc) in storage systems \cite{Guerrero2009}, insolation level in photovoltaic (PV) systems, or wind power in wind turbines. Alike the established droop methods, these values can be transmitted to the nodes via a low-bandwidth communication (tertiary control).
\end{remark}
\begin{remark}
Provided that the inverse of nominal power values are considered as cost coefficients, i.e. $P^*_I=\Lambda^{-1} \1$, we have
\begin{align*}
 \xi^*_{opt}=\frac{P^*_I}{\sum\limits_{i=1}^{n_I}P^*_I} \; \text{.}
\end{align*}
Therefore
\begin{align*}
\frac{P_{I,i}}{P_{I,j}}=\frac{P^*_{I,i}+\Upsilon_i}{P^*_{I,j}+\Upsilon_j}=\frac{\xi^*_{opt,i}}{\xi^*_{opt,j}} \; \text{.}
\end{align*}
This implies that in this case the total power injection (nominal+deviation) of each inverter remains proportional to the others.
\end{remark}
\subsubsection{Synchronous Generator and Inverters}
The main generator and the inverters share power in the model \eqref{PI system2}. All changes in the load are shared between the generator and the inverters according to the $\alpha$ and $\beta$ in the steady state. Using  \eqref{PI system2} from Corollary \ref{SGI} we have
\begin{align*}
& \bar{u}=(\frac{\alpha}{\alpha+\beta})\1^T\delta _L
\\
& \bar{v}=(\frac{\beta}{\alpha+\beta})\1^T\delta _L
\\
& \frac{\bar{u}}{\bar{v}}=\frac{\alpha}{\beta} \; \text{,}
\end{align*}
which implies that the deviations from the nominal power consumption will be shared according to $\alpha/\beta$ ratio between the synchronous generator and the inverters. 
\begin{remark}
If the ratio $\alpha/\beta$ is set equal to the ratio of the nominal power of the synchronous generator to the sum of the inverter nominal power values, i.e.,  $\frac{\alpha}{\beta}=\frac{P_G^*}{\sum\limits_{i=1}^{n_I} P^*_{I,i}}$, we have
\begin{equation*}
\frac{P_G^*+\bar{u}}{\sum\limits_{i=1}^{n_I} P^*_{I,i}+\bar{v}}=\frac{\alpha}{\beta} \; \text{,}
\end{equation*} 
which implies that the total power injections remain proportional.
\end{remark}
\begin{remark}
The ratio $P_G^*/\sum\limits_{i=1}^{n_I} P^*_{I,i}$ can be modified without the need of any communication. In fact, power balance at any steady state $\omega =0$, can define new nominal values. Hence the input power of the synchronous generator ($u$) can be independently increased/decreased to a value which the network would achieve if the load has decreased/increased. In this case, the rate of change in power dispatch (control input $u$) for the synchronous machine can be as slow as the machine can safely follow without rapid fluctuations in frequency. This change in  (master) power dispatch will be automatically followed by changes in the injected power of the inverters (slaves). 
\end{remark}
\begin{figure}
	\centering
	\includegraphics[width=8.5cm]{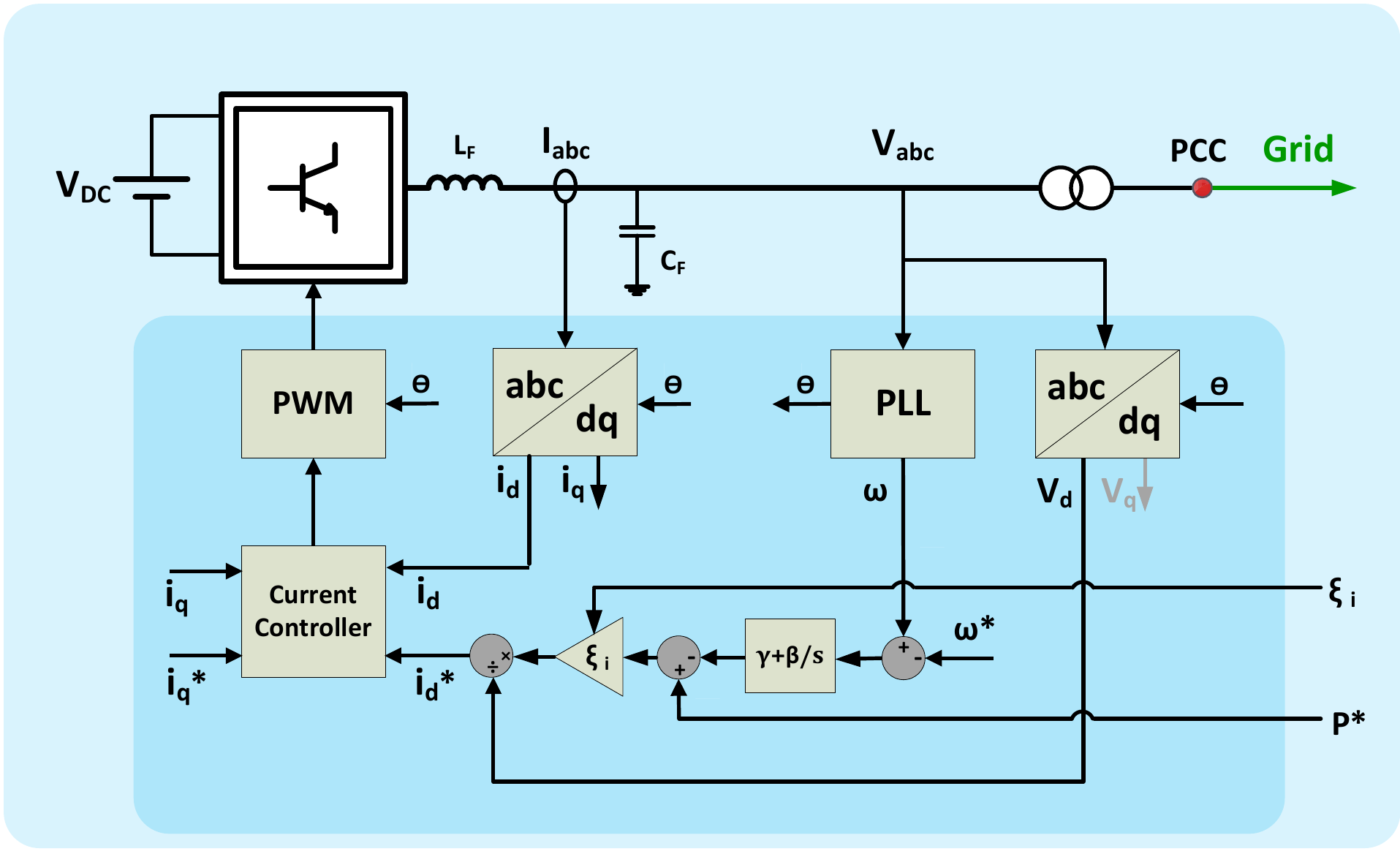}
	\caption{Implementation of the proposed grid-supporting inverter.}
	\label{imp}
\end{figure}
\section{Implementation}
Figure \ref{imp} depicts an implementation of the proposed grid-supporting current source inverter. The structure is similar to the CSI schematic provided in \cite{Rocabert2012} with modifications in order to generate desired current value $i^*$ according to our controller design. The instantaneous power delivered by the inverter is
\begin{equation}\label{instantp}
P=v_di_d+v_qi_q\;,
\end{equation}
where $v_d,v_q$ and $i_d,i_q$ are the direct and quadrature components of output current and voltage, resulting from the $dq0$ transformation. Since, the rotating frame is synchronized with the voltage, the quadrature component $v_q$ is kept at zero. Hence, \eqref{instantp} reads as
\begin{equation*}
P=v_di_d\;.
\end{equation*}
Therefore by setting $i_d$ to a desired $i^*_d$, it is possible to control the injected power. The PLL measures the frequency and the deviation from the nominal frequency is given as input to the controller. This input is added to the nominal power value $P^*$ (refer to the second equation in ), assigned a gain proportional to the power sharing, and divided by the direct axis voltage to generate the desired current $i^*_d$. PI controllers are used in the current controller block to set the output current $i_d$ to $i^*_d$.
\begin{figure}[t]
\centering
  \includegraphics[width=8.5cm]{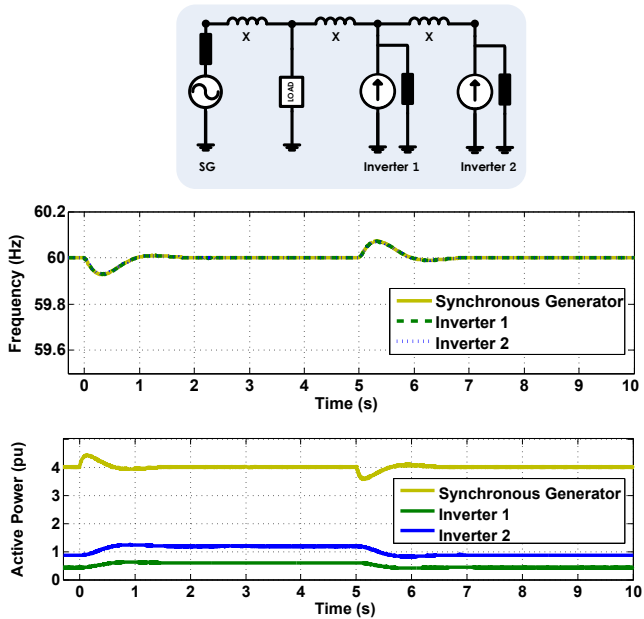}
  \caption{Simulation schematic and results.}
  \label{f:results}
\end{figure}
\section{Simulation}
The performance of the implemented model is simulated in SimPowerSystems, Matlab Simulink. A synchronous generator is connected to two current source inverters in series via inductive lines. The input power of the main generator is kept constant while the inverters are provided with integral controllers (see model \eqref{PI controller1}).
Figure \ref{f:results} depicts the schematic of the simulated microgrid, and frequency regulation and power sharing of the inverters. There is an increase in the load power ($0.5$ pu) at $t=0$ and is set back to the previous value at $t=5$. Results depict that the system is precisely synchronized in frequency at all times. Furthermore, inverters keep their ratio in sharing the active power and return precisely to their original power injection after the load is relieved.
The parameters (per unit) used in this simulation are as follows:
Synchronous Generator: $M=0.1$, $D=0.05$, $P^*_G=1$, $u=3$
; 
Inverters: $\gamma=0.15$, $\beta=1.5$, $P^*_I=1.5$, $\xi_1=\frac{1}{3}$, $\xi_2=\frac{2}{3}$
; 
Load: $P^*_L=5.5$, $\delta_L=0.5$
;
Line Inductance: $X=0.12$.
\section{Conclusion}
A microgrid with a master-slave architecture is proposed in which a synchronous generator acts as the master and current source grid-supporting inverters act as the slaves. A model is provided for such a system and its stability is analyzed. Furthermore, an implementation for the inverter with the proposed controller is provided. Results confirm the validity of the modeling and the capability of the proposed microgrid to achieve power sharing and frequency regulation without any communication among sources. Future works include considering the dynamics of the phase locked loop, reactive power sharing, and complex distribution lines.
\section{Acknowledgment}
The authors would like to thank Mathijs Lomme for providing Matlab Simulink model of current source inverters (CSI).
\bibliographystyle{IEEETran}
\bibliography{ref}
\IEEEpeerreviewmaketitle
\end{document}